\def\MODE{1}
\def\bugfix{1}
\DeclareMathOperator{\IQC}{IQC}
\begin{document}

\title{Exponential Convergence Bounds using\\ Integral Quadratic Constraints}

\if\MODE1\author{Ross Boczar \and Laurent Lessard \and Benjamin Recht}
\else\author{Ross Boczar\footnotemark[1] \and Laurent Lessard\footnotemark[2] \and Benjamin Recht\footnotemark[1]}\fi

\if\MODE2
\note{}
\else
\note{To Appear, IEEE Conference on Decision and Control, 2015\vspace{-1pt}}
\fi
\maketitle

\if\MODE1\else
\footnotetext[1]{R.~Boczar and B. Recht are with the University of California, Berkeley, CA~94720, USA. \texttt{\{boczar,brecht\}@berkeley.edu}}
\footnotetext[2]{L.~Lessard was with University of California, Berkeley for the duration of this work and is now with the University of Wisconsin--Madison, Madison, WI~53706, USA. \texttt{laurent.lessard@wisc.edu}}
\fi

\begin{abstract}
The theory of integral quadratic constraints (IQCs) allows verification of stability and gain-bound properties of systems containing nonlinear or uncertain elements. Gain bounds often imply exponential stability, but it can be challenging to compute useful numerical bounds on the exponential decay rate.
In this work, we present a modification of the classical IQC results of Megretski and Rantzer~\cite{megretski_system_1997} that leads to a tractable computational procedure for finding exponential rate certificates.
We demonstrate the effectiveness of our method via a numerical example.
\end{abstract}

\section{Introduction}\label{sec:introduction}

In robust control problems, we seek absolute performance guarantees about a system in the presence of bounded uncertainty. Examples of such results include the small gain theorem \& passivity theory~\cite{zames}, dissipativity theory~\cite{willems721}, and integral quadratic constraints (IQCs)~\cite{megretski_system_1997}.

In this paper, we present a modification of IQC theory, the most general of the aforementioned tools, that allows one to certify \emph{exponential stability} rather than just bounded-input bounded-output (BIBO) stability. Moreover, we can compute numerical bounds on the exponential decay rate of the state.

Even when BIBO stable systems are exponentially stable, the estimates of the exponential decay rates provided by standard IQC theory are typically very conservative.
We will show that this conservatism can be greatly reduced if we directly certify exponential stability and use the method presented herein to compute the associated decay rate.

Our modified IQC analysis was successfully applied in~\cite{lessard_analysis_2014} to analyze convergence properties of commonly-used optimization algorithms such as the gradient descent method. These algorithms converge at an exponential rate when applied to strongly convex functions, and the modified IQC analysis automatically produces very tight bounds on the convergence rates.

Another potential application is in time-critical applications such as embedded model predictive control, where it is vital to have robust guarantees that the desired error bounds will be achieved in the allotted time without overflow errors and in spite of fixed-point arithmetic. See \cite{mpc} and references therein.

\paragraph{A special case.}

As previously noted, exponential stability certificates are often conservative when they are derived from $L_2$ gain bounds. However, it is well known that exponential stability can be proven directly in some special cases. To illustrate this fact, consider a discrete linear time-invariant (LTI) plant $G$ with state-space realization $(A,B,C,D)$. Suppose $G$ is connected in feedback with a passive nonlinearity $\Delta$. A sufficient condition for BIBO stability is that there exists a positive definite matrix $P \succ 0$ and a scalar $\lambda \ge 0$ satisfying the linear matrix inequality (LMI)
\begin{equation}\label{eq:passiv}
\addtolength{\arraycolsep}{-0pt}
\bmat{A & B \\ I & 0}^\tp\bmat{P & 0 \\ 0 & -P}\bmat{A & B \\ I & 0}\\
+ \lambda \bmat{0 & C^\tp \\ C & D\!+\!D^\tp} \prec 0
\end{equation}
If we define $V(x) \defeq x^\tp P x$, then~\eqref{eq:passiv} implies that $V$ decreases along trajectories:
$
V(x_{k+1}) \le V(x_k)
$
for all $k$. BIBO stability then follows from positivity and boundedness of $V$.
But observe that when~\eqref{eq:passiv} holds, we may replace the right-hand side by $-\epsilon P$ for some $\epsilon > 0$ sufficiently small. We then conclude that
$
V(x_{k+1}) \le (1-\epsilon)V(x_k)
$
for all $k$ and exponential stability follows. We can then maximize $\epsilon$ subject to feasibility of~\eqref{eq:passiv} to further improve the rate bound.

Unfortunately, the simple trick shown above does not work in the general IQC setting due to the different role played by $P$ in the associated LMI. The LMI used in IQC theory comes from the Kalman-Yakubovich-Popov (KYP) lemma and although it is structurally similar to~\eqref{eq:passiv}, $P$ is not positive definite in general and $V$ may not decrease along trajectories.

Our key insight is that with a suitable modification to both the LMI \emph{and} the IQC definition, we obtain a condition that can certify exponential stability.

The paper is organized as follows. We cover some related work in the remainder of the introduction, we explain our notation and some basic results in Section~\ref{sec:prelim}, we develop and present our main result in Section~\ref{sec:main}, and we discuss computational considerations in~Section~\ref{sec:computation}. Finally, we present an illustrative example demonstrating the usefulness of our result in Section~\ref{sec:examples}, and we make some concluding remarks in Section~\ref{sec:conclusion}.

\paragraph{Related work.}
It is noted in~\cite{megretski_system_1997,rantzer_system_1994} that BIBO stability often implies exponential stability. In particular, we get exponential stability if the nonlinearity satisfies an additional \emph{fading memory} property. The proof of this result is chiefly concerned with showing \emph{existence} of an exponential decay rate. Although the proof  constructs an exponential rate, the construction is based on the assumed $L_2$ gain of the linear map, and thus can be very conservative.

Other proofs of exponential stability have appeared in the literature for specific classes of nonlinearities. Some examples include~\cite{corless_bounded_1993,konishi_robust_1999}, which treat sector-bounded nonlinearities, and \cite{jonsson_nonlinear_1997}, which treats nonlinearities satisfying a Popov IQC. These works exploit LMI modifications akin to the one shown with~\eqref{eq:passiv} earlier in this section.

The sequel is inspired by the recent paper~\cite{lessard_analysis_2014}, which presents an approach for proving the robust exponential stability of optimization algorithms. The approach of~\cite{lessard_analysis_2014} uses a time-domain formulation of IQCs modified to handle exponential stability. In contrast, the present work develops the aforementioned exponential stability modification entirely in the frequency domain and clarifies its connection to the seminal IQC results in~\cite{megretski_system_1997}.  

\section{Notation and preliminaries}\label{sec:prelim}

We adopt a setup analogous to the one used in~\cite{megretski_system_1997}, with the exception that we will work in discrete time rather than continuous time.
%
%
%
The conjugate transpose of a vector $v\in\C^n$ is denoted $v^*$. The unit circle in the complex plane is denoted $\T \defeq \set{z\in\C}{|z|=1}$
The $z$-transform of a time-domain signal $x\defeq (x_0,x_1,\dots)$ is denoted $\hat x (z)$ and defined as
$
\hat x(z) \defeq \sum_{k=0}^\infty x_k z^{-k}
$.

A Hermitian positive definite (semidefinite) matrix $M$ is denoted $M \succ 0$ ($M \succeq 0$).
Function composition is denoted $(g\circ f)(x):=g(f(x))$.
A sequence $u = (u_0,u_1,\dots)$ is said to be in $\ltwo$ if $\sum_{k=0}^\infty|u_k|^2<\infty$. A sequence $u_k$ is said to be in $\ltwo^\rho$ for some $\rho \in (0,1)$ if the sequence $(\rho^{-k}u_k)$ is in $\ltwo$, i.e. $\sum_{k=0}^\infty\rho^{-2k}|u_k|^2<\infty$. Note that $\ltwo^\rho \subset \ltwo$.
%
Let $\RHinf^{m\times n}$ be the set of $m\times n$ matrices whose elements are proper rational functions with real coefficients analytic on the closed unit disk.

Consider the standard setup of Fig.~\ref{fig:lure1}. The block $G$ contains the known LTI part of the system while $\Delta$ contains the part that is uncertain, unknown, nonlinear, or otherwise troublesome.
	\begin{figure}[thpb]
		\centering
		\includegraphics{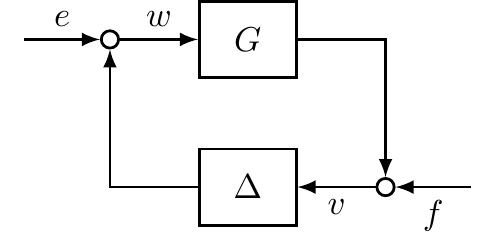}
		\caption{Linear time-invariant system $G$ in feedback with a nonlinearity $\Delta$.
		}\label{fig:lure1}
	\end{figure}
	The interconnection is said to be \emph{well-posed} if the map $(v,w)\mapsto(e,f)$ has a causal inverse. The interconnection is said to be BIBO stable if, in addition, there exists some $\gamma > 0$ such that when $G$ is initialized with zero state,
	\begin{equation*}\label{eq:gain}
	\norm{v}^2 + \norm{w}^2 \le \gamma \bl( \norm{e}^2 + \norm{f}^2 \br)
	\end{equation*}
	for all square-summable inputs $f$ and $e$, and where $\norm{\cdot}$ denotes the $\ltwo$ norm.
	Finally, the interconnection is \emph{exponentially stable} if there exists some $\rho \in (0,1)$ and $c > 0$ such that if $f=0$ and $e=0$, the state $x_k$ of $G$ will decay exponentially with rate $\rho$. That is,
	\[
	\norm{x_k} \le c\, \rho^k\, \norm{x_0}
	\qquad\text{for all }k.
	\]

We now present the classical IQC definition and stability result, which will be modified in the sequel to guarantee exponential convergence. These results are the discrete-time analog of the main IQC results of Megretski and Rantzer~\cite{megretski_system_1997}.

\begin{defn}[IQC]\label{def:iqc}
	Signals $y \in \ltwo$ and $u \in \ltwo$  with associated $z$-transforms  $\hat y(z)$ and $\hat u(z)$  satisfy the \emph{IQC} defined by a Hermitian complex-valued function $\Pi$ if
  \begin{equation}\label{iqceq}
    \int_\T\,
    \bmat{ \hat y(z)\\ \hat u(z) }^* \Pi(z)
    \bmat{ \hat y(z)\\ \hat u(z) } dz \geq 0\:.
  \end{equation}
A bounded operator $\Delta$ satisfies the IQC defined by $\Pi$ if~\eqref{iqceq} holds for all $y\in\ltwo$ with $u = \Delta(y)$. We also define $\IQC(\Pi(z))$ to be the set of all $\Delta$ that satisfy the IQC defined by $\Pi$.
\end{defn}
\begin{thm}[Stability result]
\label{thm:classic}
Let $G(z) \in \RHinf^{m\times n}$ and let $\Delta$ be a bounded causal operator. Suppose that:
  \begin{enumerate}[i)]
    \item for every $\tau \in [0,1]$, the interconnection of $G$ and $\tau \Delta$ is well-posed.
    \item for every $\tau \in [0,1]$, we have $\tau \Delta \in \IQC(\Pi(z))$.
    \item there exists $\epsilon > 0$ such that
      \begin{equation*}
          \begin{bmatrix}
          G(z)\\I
          \end{bmatrix}^*
          \Pi(z)
          \begin{bmatrix}
          G(z)\\I
          \end{bmatrix} \preceq -\epsilon I, \quad \forall z\in \T\:.
      \end{equation*}
  \end{enumerate}
Then, the feedback interconnection of $G$ and $\Delta$ is stable.
\end{thm}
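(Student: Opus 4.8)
\emph{Proposed approach.} The plan is to adapt the homotopy argument of Megretski and Rantzer~\cite{megretski_system_1997} to the discrete-time $\ltwo$ setting. For $\tau\in[0,1]$ write the loop map of the interconnection of $G$ with $\tau\Delta$ as $\Phi_\tau\colon(v,w)\mapsto(e,f)$, where $f=v-Gw$ and $e=w-\tau\Delta(v)$. By hypothesis~(i) each $\Phi_\tau$ has a causal inverse, and the statement ``$\Phi_1^{-1}$ is a \emph{bounded} operator on $\ltwo$'' is precisely the assertion that the interconnection of $G$ and $\Delta$ is (BIBO) stable. Let $S\defeq\set{\tau\in[0,1]}{\Phi_\tau^{-1}\text{ is bounded on }\ltwo}$. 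Since the interconnection of $G$ with the zero operator is trivially stable (because $G\in\RHinf$), we have $0\in S$, and it suffices to show $S=[0,1]$.

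First I would prove a \emph{uniform a priori bound}: there is a constant $\gamma>0$, depending only on $\epsilon$, on $\norm{G}_\infty$, and on the (standing) bound for $\Pi$ on $\T$, such that for every $\tau\in S$ and every $(e,f)\in\ltwo$ the signals $(v,w)=\Phi_\tau^{-1}(e,f)$ satisfy $\norm{v}^2+\norm{w}^2\le\gamma\left(\norm{e}^2+\norm{f}^2\right)$. Put $\tilde w\defeq w-e=\tau\Delta(v)$ and $\tilde f\defeq Ge+f$, so that $v=G\tilde w+\tilde f$ and therefore, in the frequency domain, $\bmat{\hat v \\ \hat{\tilde w}}=\bmat{G \\ I}\hat{\tilde w}+\bmat{\hat{\tilde f} \\ 0}$. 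Because $\tau\in S$ forces $v\in\ltwo$, hypothesis~(ii) lets us apply the IQC~\eqref{iqceq} to the pair $(v,\tilde w)$. Substituting the displayed identity into~\eqref{iqceq} and expanding the quadratic form, the term quadratic in $\hat{\tilde w}$ is $\int_\T\hat{\tilde w}^*\bmat{G \\ I}^*\Pi\bmat{G \\ I}\hat{\tilde w}\,dz$, which by hypothesis~(iii) and Parseval's identity is bounded above by $-c_0\norm{\tilde w}^2$ for some $c_0>0$ proportional to $\epsilon$; the remaining cross term and the term involving only $\tilde f$ are bounded in modulus by $c_1\norm{\tilde f}\,\norm{\tilde w}+c_2\norm{\tilde f}^2$, using the boundedness of $\Pi$ on $\T$ and of $G$. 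Since~\eqref{iqceq} makes the whole expression nonnegative, we obtain $0\le-c_0\norm{\tilde w}^2+c_1\norm{\tilde f}\,\norm{\tilde w}+c_2\norm{\tilde f}^2$, hence $\norm{\tilde w}\le c_3\norm{\tilde f}$; combining this with $\norm{v}\le\norm{G}_\infty\norm{\tilde w}+\norm{\tilde f}$, $\norm{w}\le\norm{\tilde w}+\norm{e}$ and $\norm{\tilde f}\le\norm{G}_\infty\norm{e}+\norm{f}$ yields the claimed bound, and in particular $\norm{\Phi_\tau^{-1}}\le\sqrt\gamma$ for all $\tau\in S$.

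Next I would show that $S$ is open in $[0,1]$ with a neighborhood radius independent of the base point, which together with $0\in S$ forces $S=[0,1]$ by a finite continuation. Fix $\tau_0\in S$, let $(e,f)\in\ltwo$, and for $\tau$ close to $\tau_0$ let $(v,w)=\Phi_\tau^{-1}(e,f)$, which is causal by hypothesis~(i). From $\Phi_{\tau_0}(v,w)=\Phi_\tau(v,w)+\left((\tau-\tau_0)\Delta(v),\,0\right)=\left(e+(\tau-\tau_0)\Delta(v),\,f\right)$ and the fact that $\Phi_{\tau_0}^{-1}$ is bounded \emph{and} causal, truncating to the first $N$ time steps gives $\norm{P_N(v,w)}\le\sqrt\gamma\left(\norm{e}+\norm{f}\right)+\sqrt\gamma\,|\tau-\tau_0|\,\norm{\Delta}\,\norm{P_N(v,w)}$, where $P_N$ denotes truncation and $\norm{\Delta}$ the gain of $\Delta$. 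If $|\tau-\tau_0|\le\left(2\sqrt\gamma\,\norm{\Delta}\right)^{-1}$ then $\norm{P_N(v,w)}\le2\sqrt\gamma\left(\norm{e}+\norm{f}\right)$ for every $N$, so $(v,w)\in\ltwo$ and $\tau\in S$. Since $\left(2\sqrt\gamma\,\norm{\Delta}\right)^{-1}$ does not depend on $\tau_0$, we may step from $0$ to $1$ in finitely many such increments, so $1\in S$, which is the desired conclusion.

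The step I expect to be the main obstacle is the a priori bound: the IQC~\eqref{iqceq} must be invoked only where it is legitimate --- hence only for $\tau\in S$, where $v\in\ltwo$ is guaranteed --- and the constants must be tracked carefully through the frequency-domain expansion so that the cross term and the $\tilde f$-term are genuinely dominated by the strictly negative contribution $-c_0\norm{\tilde w}^2$ supplied by hypothesis~(iii). Once that is in hand, the truncation bookkeeping and the finite-step continuation in the previous paragraph are routine.
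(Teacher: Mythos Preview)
The paper does not give its own proof of Theorem~\ref{thm:classic}; it is stated without proof as the discrete-time analog of the main IQC stability theorem of Megretski and Rantzer~\cite{megretski_system_1997}. Your proposal is precisely the standard homotopy/continuation argument from that reference, transcribed to the discrete-time $\ltwo$ setting: the uniform a~priori bound obtained by expanding the IQC quadratic form around $\bmat{G\\I}\hat{\tilde w}$ and using hypothesis~(iii) to extract a strictly negative $-c_0\norm{\tilde w}^2$ term, followed by a $\tau$-step of size $\bl(2\sqrt{\gamma}\,\norm{\Delta}\br)^{-1}$ independent of the base point, is exactly how~\cite{megretski_system_1997} proceeds. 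The truncation bookkeeping you outline (using causality of $\Phi_{\tau_0}^{-1}$ and of $\Delta$ to reduce to finite, hence $\ltwo$, signals before invoking the gain bound) is the correct way to handle the fact that $(v,w)$ is not yet known to lie in $\ltwo$, and your identification of the a~priori bound as the place requiring care is accurate. In short, your approach is correct and coincides with the argument the paper defers to.
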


\section{Frequency-domain condition}\label{sec:main}
In this section, we augment Definition~\ref{def:iqc} and the classical result of Theorem~\ref{thm:classic} to derive a frequency-domain condition that certifies exponential stability.

\begin{defn}
The operators $\rho_+,\:\rho_-$ are defined as the time-domain, time-dependent multipliers $\rho^k, \rho^{-k}$, respectively, where $\rho\in(0,1)$ is a defined constant.
\end{defn}
\begin{rem}\label{rem:rho}
The operator $\rho_- \circ (G(z) \circ \rho_+)$ is equivalent to the operator $G(\rho z)$. This follows from the fact that, for any constant $a>0$ and signal $u_k$, the $z$-transform of $a^{-k}u_k$ is given by $\hat u(az)$. See Fig.~\ref{fig:rho_comp} for an illustration.
\begin{figure}[thpb]
  \centering
  \includegraphics{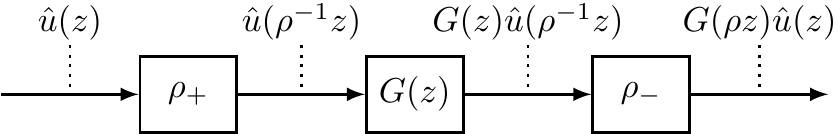}
  \caption{Illustration of Remark \ref{rem:rho}.}
  \label{fig:rho_comp}
\end{figure}
\end{rem}
In order to show exponential stability of the system in Fig.~\ref{fig:lure1}, we will relate it to BIBO stability of the modified system shown in Fig.~\ref{fig:lure2}. This equivalence is closely related to the theory of stability multipliers \cite{safonov_zames-falb_2000}.

\begin{figure}[thpb]
  \centering
  \includegraphics{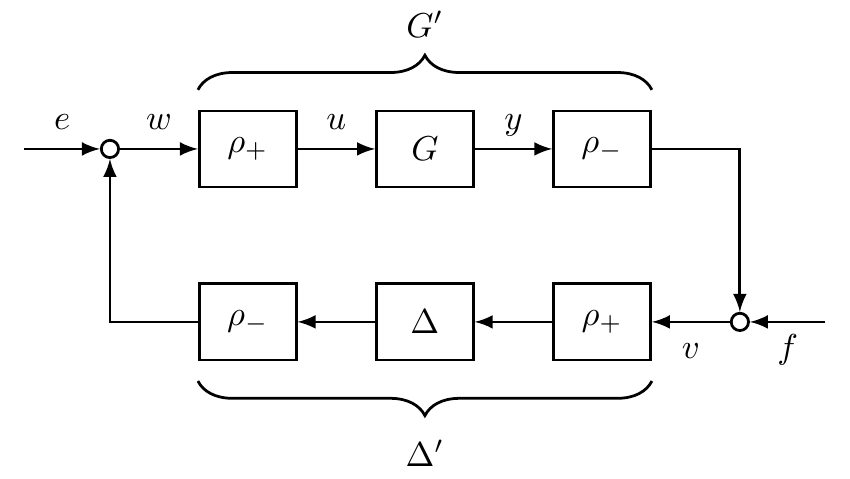}
  \caption{Modified feedback diagram with additional multipliers and inputs. For appropriately chosen $e$ and $f$ and with zero initial condition, we show how this diagram is equivalent to that of Fig.~\ref{fig:lure1}.}
  \label{fig:lure2}
\end{figure}

\begin{prop}\label{prop:exp}
	Suppose $G(z)$ has a minimal realization $(A,B,C,D)$. If the interconnection in Fig.~\ref{fig:lure2} is stable with zero initial condition, then the interconnection in Fig.~\ref{fig:lure1} with initial state~$x_0$ is exponentially stable.
	
\end{prop}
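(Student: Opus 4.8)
The plan is to exploit the diagonal-scaling identity of Remark~\ref{rem:rho}: the modified loop of Fig.~\ref{fig:lure2} is obtained from Fig.~\ref{fig:lure1} by sandwiching $G$ between $\rho_+$ and $\rho_-$ (and conjugating $\Delta$ by the inverse scalings), which turns $G(z)$ into $G(\rho z)$ and, crucially, rescales every internal signal $s_k$ of the original loop into $\rho^{-k}s_k$. So I would first make the signal correspondence precise: run the original autonomous system (Fig.~\ref{fig:lure1} with $e=f=0$) from initial state $x_0$, producing a state trajectory $(x_k)$ and loop signals $(v_k,w_k)$; then show that the \emph{scaled} signals $\tilde x_k \defeq \rho^{-k}x_k$, $\tilde v_k \defeq \rho^{-k}v_k$, $\tilde w_k \defeq \rho^{-k}w_k$ are exactly the signals produced by the modified loop of Fig.~\ref{fig:lure2} when it is run from \emph{zero} initial condition but with the external inputs $e,f$ chosen to inject the effect of $x_0$. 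Concretely, with a minimal realization $(A,B,C,D)$ the nonzero initial state contributes a term $C A^{k} x_0$ (a decaying-in-general-but-not-here exponential "free response"); after the $\rho^{-k}$ scaling this free response has $z$-transform supported on a fixed rational vector, and I would take $e$ (and $f=0$, or whatever the diagram's convention dictates) to be precisely this signal, which is finitely parametrized by $x_0$.

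Second, I would invoke the stability hypothesis on Fig.~\ref{fig:lure2}: since that interconnection is BIBO stable (with zero initial state), there is a gain $\gamma$ such that $\|\tilde v\|^2 + \|\tilde w\|^2 \le \gamma(\|e\|^2 + \|f\|^2)$. The key point is that the injected $e$ lies in $\ltwo$ — indeed it is a fixed rational transfer function applied to the "impulse" carrying $x_0$, so $\|e\| \le \kappa \|x_0\|$ for a constant $\kappa$ depending only on $(A,B,C,D)$ and $\rho$. Hence $\|\tilde w\| \le \sqrt{\gamma}\,\kappa\,\|x_0\|$, i.e. $\sum_k \rho^{-2k}\|w_k\|^2 < \infty$ with the sum bounded by $C'\|x_0\|^2$.

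Third, I would translate the bound on $w$ back to a bound on the state. Since $x_{k+1} = A x_k + B w_k$ and the scaled signal $\tilde w \in \ltwo$, the scaled state $\tilde x_k = \rho^{-k}x_k$ satisfies a linear recursion driven by an $\ltwo$ input; because $\rho^{-1}A$ need not be stable I cannot conclude $\tilde x \in \ltwo$ directly, so instead I would argue at the level of the full loop: the map from the external input $e$ in Fig.~\ref{fig:lure2} to the \emph{internal state} of the (stable) realization of $G(\rho z)$ is itself a stable LTI map, so $(\tilde x_k) \in \ltwo$ with $\|\tilde x\| \le C''\|x_0\|$. In particular each term is bounded, $\rho^{-2k}\|x_k\|^2 \le \|\tilde x\|^2 \le (C'')^2\|x_0\|^2$, which rearranges to $\|x_k\| \le C''\,\rho^{k}\,\|x_0\|$ — exactly exponential stability with rate $\rho$ and constant $c = C''$.

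The main obstacle is the rigorous bookkeeping in step one: carefully defining how the modified diagram in Fig.~\ref{fig:lure2} (which as drawn has zero initial condition) absorbs the free response $CA^k x_0$ of the original plant into its external ports $e,f$, and checking that the resulting $e$ really is in $\ltwo$ with norm controlled \emph{linearly} by $\|x_0\|$. This uses minimality of $(A,B,C,D)$ (so that the initial-condition map is well-behaved) together with Remark~\ref{rem:rho} to identify $\rho_- G \rho_+ = G(\rho z)$; once the signal dictionary between the two diagrams is established, the rest is a direct application of the assumed BIBO gain bound plus elementary manipulation of the defining sum for $\ltwo^\rho$.
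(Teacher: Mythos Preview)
Your plan has a real gap, and it shows up as an internal contradiction between your steps. In step~2 you take $e$ to be (essentially) the scaled free response $\rho^{-k} C A^{k} x_0$ and assert that this lies in $\ltwo$; that is true only when the spectral radius of $A$ is strictly less than $\rho$, i.e.\ when $\rho^{-1}A$ is Schur. But in step~3 you explicitly say ``$\rho^{-1}A$ need not be stable.'' You cannot have it both ways: if $\rho^{-1}A$ is not Schur, then by observability the sequence $\rho^{-k}CA^k x_0$ blows up for generic $x_0$ and your injected $e$ is not in $\ltwo$, so the BIBO bound of Fig.~\ref{fig:lure2} gives you nothing. Similarly, your appeal in step~3 to ``the map from $e$ to the internal state of the (stable) realization of $G(\rho z)$ is a stable LTI map'' presupposes exactly the stability you disclaimed, and in any case that map passes through the nonlinearity $\Delta'$, so it is not LTI.

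The paper's proof is organized precisely to avoid this issue, and this is where minimality actually does work. Instead of injecting an infinite-horizon free response, it uses \emph{controllability} to drive the zero-initialized Fig.~\ref{fig:lure2} to the state $x_0$ in $n$ steps via a \emph{finite-support} pair $(e,f)$; finite support is automatically $\ltwo$ regardless of the eigenvalues of $A$. After time $n$ the two diagrams coincide (with a time shift), and BIBO stability bounds $\sum_k(\|v_k\|^2+\|w_k\|^2)$ by a constant, forcing $\|u_k\|,\|y_k\|\le c\rho^k$. Then, rather than propagating an $\ltwo$ input through a possibly unstable $\rho^{-1}A$, the paper uses \emph{observability}: pick $L$ so that $A+LC$ is nilpotent, rewrite $x_{k+1}=(A+LC)x_k+\bmat{LD+B & -L}\bmat{u_k\\y_k}$, and observe that for $k\ge n$ the state is a finite linear combination of the last $n$ samples of $(u,y)$, each of size $O(\rho^k)$. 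Your outline does not use either controllability or observability in this way, and without the extra hypothesis $G(\rho z)\in\RHinf$ (which Theorem~\ref{thm:main} assumes but Proposition~\ref{prop:exp} does not) your argument does not close.
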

\begin{proof}
Intuitively, if $v$ and $w$ are \emph{small} in the BIBO sense compared to $e$ and $f$, then $y$ must be even smaller. \if\MODE1
A complete proof is included in the appendix.
\else
See~\cite{boczar_arxiv} for a detailed proof.
\fi
	\end{proof}

Ideally, we would like to find a suitable redefinition of the IQC for this transformed system shown in Fig.~\ref{fig:lure2}. To this end, we introduce the concept of the \emph{$\rho$-IQC}.
\begin{defn}[$\rho$-IQC]\label{def:piqc}
	Signals $y \in \ltwo^\rho$ and $u \in \ltwo^\rho$  with associated $z$-transforms  $\hat y(z)$ and $\hat u(z)$  satisfy the \emph{$\rho$-IQC} defined by a Hermitian complex-valued function $\Pi$ if
  \begin{equation}\label{rr}
  \int_\T\,
  \begin{bmatrix} \hat y(\rho z)\\ \hat u(\rho z) \end{bmatrix}^* \Pi(\rho z)
  \begin{bmatrix} \hat y(\rho z)\\ \hat u(\rho z) \end{bmatrix} dz \geq 0\:.
  \end{equation}
	A bounded operator $\Delta$ satisfies the $\rho$-IQC defined by $\Pi$ if~\eqref{rr} holds for all $y\in\ltwo^\rho$ with $u = \Delta(y)$. We also define $\IQC(\Pi(z),\rho)$ to be the set of all $\Delta$ that satisfy the $\rho$-IQC defined by $\Pi$.	
\end{defn}

Note that the concept of a $\rho$-IQC generalizes that of a regular IQC. Indeed, we have $\IQC(\Pi(z),1) = \IQC(\Pi(z))$. The restriction of $u \in \ltwo^\rho$ and $y \in \ltwo^\rho$ corresponds to the restriction of $u \in \ltwo$ and $y \in \ltwo$ in the classical definition of IQC \cite{megretski_system_1997}. 
%
%
Now equipped with $\rho$-IQCs, we can relate $\Delta'$ in Fig.~\ref{fig:lure2} to  $\Delta$ in Fig.~\ref{fig:lure1}.

\begin{prop}\label{prop:iqcequiv}
Let $\Delta$ be a nonlinearity, and let $\Pi$ be a Hermitian complex-valued function. As in Fig.~\ref{fig:lure2}, define $\Delta' \defeq \rho_- \circ (\Delta \circ \rho_+)$. Then the following statements are equivalent.
\begin{enumerate}[(i)]
	\item $\Delta \in \IQC(\Pi(z),\rho)$
	\item $\Delta' \in \IQC(\Pi(\rho z) )$
\end{enumerate}
\end{prop}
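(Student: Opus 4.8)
The plan is to prove the equivalence by unraveling the definitions of $\IQC(\Pi(z),\rho)$ and $\IQC(\Pi(\rho z))$ and showing that the two integral inequalities are, up to the substitution $\Delta' = \rho_- \circ (\Delta \circ \rho_+)$, literally the same inequality. The key technical device is the observation in Remark~\ref{rem:rho}: for a signal $u_k$, the $z$-transform of $\rho^{-k} u_k$ is $\hat u(\rho z)$, and the $z$-transform of $\rho^{k} u_k$ is $\hat u(z/\rho)$. I would also need the basic fact that $\Delta$ maps $\ltwo^\rho$ to $\ltwo^\rho$ if and only if $\Delta'$ maps $\ltwo$ to $\ltwo$, which follows directly from the definitions of $\rho_\pm$ and $\ltwo^\rho$ (a sequence $y$ is in $\ltwo^\rho$ exactly when $\rho_- y$ is in $\ltwo$).

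The main steps are as follows. First I would set up the correspondence between test signals: $\Delta' = \rho_-\circ\Delta\circ\rho_+$ acts on $\ltwo$ signals $\tilde y$, and writing $y \defeq \rho_+ \tilde y$ (so $\tilde y = \rho_- y$) gives a bijection between $\ltwo$ signals $\tilde y$ and $\ltwo^\rho$ signals $y$; correspondingly $\tilde u \defeq \Delta'(\tilde y) = \rho_-(\Delta(y)) = \rho_- u$ where $u = \Delta(y) \in \ltwo^\rho$. Second, I would compute the $z$-transforms: by Remark~\ref{rem:rho}, $\hat{\tilde y}(z) = \hat y(\rho z)$ and $\hat{\tilde u}(z) = \hat u(\rho z)$. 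Third, I would substitute into the $\IQC(\Pi(\rho z))$ condition \eqref{iqceq} applied to $(\tilde y, \tilde u)$, namely $\int_\T \bmat{\hat{\tilde y}(z)\\ \hat{\tilde u}(z)}^* \Pi(\rho z) \bmat{\hat{\tilde y}(z)\\ \hat{\tilde u}(z)}\,dz \ge 0$, and recognize that after the substitution this is exactly the $\rho$-IQC condition \eqref{rr} for $(y,u)$. Since the correspondence $\tilde y \leftrightarrow y$ is a bijection, the quantifier ``for all $y \in \ltwo^\rho$'' matches ``for all $\tilde y \in \ltwo$,'' and the two statements are equivalent.

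I would organize the argument so that $(i)\!\implies\!(ii)$ and $(ii)\!\implies\!(i)$ are handled simultaneously by this bijective change of variables, rather than as two separate implications, since the computation is symmetric. One point requiring a sentence of care is confirming that $\Delta'$ is a bounded operator on $\ltwo$ (so that Definition~\ref{def:iqc} applies to it) precisely when $\Delta$ is bounded on $\ltwo^\rho$ — this is immediate since $\rho_+$ and $\rho_-$ are bounded multiplication operators on the relevant spaces (indeed $\rho_+$ maps $\ltwo$ into $\ltwo^\rho$ isometrically after the weighting, and similarly for $\rho_-$), but it should be noted to keep the statement well-posed.

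I do not anticipate a serious obstacle here; the result is essentially a bookkeeping lemma. The only place where one can go wrong is in keeping the direction of the $\rho$-scaling consistent between $\rho_+$ (which multiplies by $\rho^k$ in time, hence dilates the argument as $\hat u(z/\rho)$) and $\rho_-$ (which multiplies by $\rho^{-k}$, hence contracts the argument as $\hat u(\rho z)$), and in checking that the composition $\rho_-\circ\Delta\circ\rho_+$ threads these together so that the net effect on the frequency-domain quadratic form is precisely the replacement $\Pi(z)\mapsto\Pi(\rho z)$ together with evaluation of the signals' transforms at $\rho z$. Once the signal-level dictionary $\hat{\tilde y}(z)=\hat y(\rho z)$, $\hat{\tilde u}(z)=\hat u(\rho z)$ is firmly in place, both implications fall out by inspection.
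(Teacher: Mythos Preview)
Your proposal is correct and follows essentially the same approach as the paper: set up the dictionary $\hat{\tilde y}(z)=\hat y(\rho z)$, $\hat{\tilde u}(z)=\hat u(\rho z)$ via Remark~\ref{rem:rho} and substitute into the IQC definition to see that \eqref{iqceq} for $(\tilde y,\tilde u,\Pi(\rho z))$ coincides with \eqref{rr} for $(y,u,\Pi)$. The paper's proof is a terse three-line version of exactly this; your added remarks on the $\ltwo\leftrightarrow\ltwo^\rho$ bijection and boundedness of $\Delta'$ are welcome rigor but do not change the argument.
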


\begin{proof}
We define the discrete Fourier transform of the input and output of $\Delta$ as $\hat y(z)$ and $\hat u(z)$, respectively. Then, from the definition of $\rho_+$ and $\rho_-$, we have that $\hat w(z) = \hat u(\rho z)$ and $\hat v(z) = \hat y(\rho z)$. Substituting into the IQC definition~\eqref{iqceq}, we obtain~\eqref{rr} as required.
\end{proof}
Proposition~\ref{prop:iqcequiv} is illustrated in Fig.~\ref{fig:rho_equiv}.
\begin{figure}[thpb]
	\centering
	\includegraphics{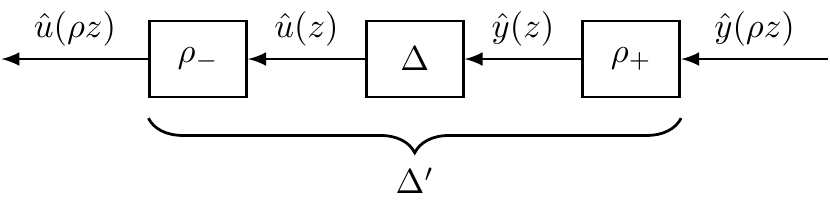}
	\caption{Illustration of Proposition \ref{prop:iqcequiv}.}
	\label{fig:rho_equiv}
\end{figure}

We now state our main result, an exponential stability theorem analogous to the classical result in Theorem~\ref{thm:classic}.

\begin{thm}[Exponential stability]
\label{thm:main}
Fix $\rho \in (0,1)$. Let $G(\rho z) \in \RHinf^{m\times n}$ and let $\Delta$ be a bounded causal operator. Suppose that:
  \begin{enumerate}[i)]
    \item for every $\tau \in [0,1]$, the interconnection of $G$ and $\tau \Delta$ is well-posed.
    \item for every $\tau \in [0,1]$, we have $\tau\Delta \in \IQC(\Pi(z),\rho)$.
    \item there exists $\epsilon > 0$ such that
      \begin{equation}\label{eq:expfdi}
          \begin{bmatrix}
          G(\rho z)\\I
          \end{bmatrix}^*
          \Pi(\rho z)
          \begin{bmatrix}
          G(\rho z)\\I
          \end{bmatrix} \preceq -\epsilon I, \quad \forall z\in \T\:.
      \end{equation}
  \end{enumerate}
Then, the interconnection of $G$ and $\Delta$ shown in Fig.~\ref{fig:lure1} is exponentially stable with rate $\rho$.
\end{thm}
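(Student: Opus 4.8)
The plan is to reduce the theorem to the classical result, Theorem~\ref{thm:classic}, applied to the modified interconnection of Fig.~\ref{fig:lure2}, and then to invoke Proposition~\ref{prop:exp} to pass from BIBO stability of that modified loop to exponential stability of the original loop in Fig.~\ref{fig:lure1}. The first step is to recognize the constituents of Fig.~\ref{fig:lure2}: by Remark~\ref{rem:rho}, placing $\rho_+$ at the input of $G$ and $\rho_-$ at its output produces the transfer matrix $G(\rho z)$, while the nonlinear block, similarly sandwiched, becomes $\Delta' \defeq \rho_- \circ (\Delta \circ \rho_+)$. Hence Fig.~\ref{fig:lure2} is exactly the feedback interconnection of $G(\rho z)$ and $\Delta'$ driven by the exogenous signals $e$ and $f$.

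Next I would check that the three hypotheses of Theorem~\ref{thm:classic} hold for the pair $\bl(G(\rho z),\Delta'\br)$ with multiplier $\Pi(\rho z)$. Condition~(iii) is immediate: it is precisely the frequency-domain inequality~\eqref{eq:expfdi}, and $G(\rho z) \in \RHinf^{m\times n}$ is assumed. For condition~(ii), observe that $(\tau\Delta)' = \rho_-\circ\bl((\tau\Delta)\circ\rho_+\br) = \tau\Delta'$, since scalar multiples commute with the linear multipliers $\rho_\pm$; applying Proposition~\ref{prop:iqcequiv} to $\tau\Delta$ then converts hypothesis~(ii) of the present theorem, $\tau\Delta \in \IQC(\Pi(z),\rho)$, into $\tau\Delta' \in \IQC(\Pi(\rho z))$ for every $\tau \in [0,1]$. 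For condition~(i), well-posedness of $G(\rho z)$ in feedback with $\tau\Delta'$ follows from hypothesis~(i): since $\rho_+$ and $\rho_-$ are causal and boundedly invertible with causal inverses $\rho_-$ and $\rho_+$, pre- and post-composing the causal inverse of the loop map of Fig.~\ref{fig:lure1} with these multipliers yields a causal inverse of the corresponding loop map of Fig.~\ref{fig:lure2}.

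Granting these, Theorem~\ref{thm:classic} yields that the interconnection of $G(\rho z)$ and $\Delta'$, i.e.\ the system of Fig.~\ref{fig:lure2}, is stable with zero initial condition. Because $G(\rho z)$ is a proper rational matrix, so is $G(z)$, which therefore admits a minimal realization $(A,B,C,D)$; Proposition~\ref{prop:exp} then applies and gives that the interconnection of $G$ and $\Delta$ in Fig.~\ref{fig:lure1}, started from an arbitrary initial state $x_0$, is exponentially stable with rate $\rho$, as claimed.

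I expect the one genuinely delicate point to be the signal-space bookkeeping underlying the first two steps: the multipliers $\rho_\pm$ carry $\ltwo$ into the weighted space $\ltwo^\rho$ and back (isometrically), so one must confirm that $\Delta'$ is a well-defined bounded causal operator on $\ltwo$ --- which hinges on $\Delta$ mapping $\ltwo^\rho$ into $\ltwo^\rho$, an assumption implicit in the $\rho$-IQC hypothesis --- and that the exogenous and internal loop signals of Fig.~\ref{fig:lure2} lie in the spaces demanded by Theorem~\ref{thm:classic}. Most of that work, however, has already been absorbed into Propositions~\ref{prop:exp} and~\ref{prop:iqcequiv}, so the theorem proper is essentially an assembly of those two results with the classical IQC stability theorem.
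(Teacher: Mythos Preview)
Your proposal is correct and follows essentially the same route as the paper: apply Theorem~\ref{thm:classic} to the modified loop of Fig.~\ref{fig:lure2} with $G' = G(\rho z)$, $\Delta' = \rho_-\circ(\Delta\circ\rho_+)$, and multiplier $\Pi(\rho z)$, verifying conditions~(i)--(iii) via well-posedness equivalence, Proposition~\ref{prop:iqcequiv}, and~\eqref{eq:expfdi} respectively, then invoke Proposition~\ref{prop:exp}. The paper's argument is a terser version of exactly this; your added remarks about $\rho_\pm$ furnishing causal inverses for well-posedness and about the $\ltwo/\ltwo^\rho$ bookkeeping are reasonable elaborations that the paper leaves implicit.
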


\begin{proof}
Roughly, we apply Theorem \ref{thm:classic} to the interconnection in Fig.~\ref{fig:lure2} with operators $G'$ and $\Delta'$ and IQC $\Pi(\rho z)$.
\if\MODE1
\begin{enumerate} [(a)]
  \item Since Fig.~\ref{fig:lure1} and Fig.~\ref{fig:lure2} have the same interconnection structure, well-posedness is equivalent. 
  \item Due to the equivalence of IQCs in Proposition \ref{prop:iqcequiv},
  \begin{align*}
  &\hspace{-3em}\tau \Delta \in \IQC(\Pi(z),\rho)\\
  &\; \iff \rho_- \circ ((\tau \Delta) \circ \rho_+) \in \IQC(\Pi(\rho z)) \\
  &\; \iff \tau (\rho_- \circ (\Delta \circ \rho_+)) \in \IQC(\Pi(\rho z))  \\
  &\; \iff \tau \Delta' \in \IQC(\Pi(\rho z)) \:.
  \end{align*}
  \item This is condition iii) of Theorem \ref{thm:classic} using $G'$ and $\Delta'$.
\end{enumerate}
Thus, these three conditions ensure BIBO stability of the system in Fig.~\ref{fig:lure2}. We then apply Proposition~\ref{prop:exp} to arrive at exponential stability of Fig.~\ref{fig:lure1}.
\else
See~\cite{boczar_arxiv} for a detailed proof.
\fi
\end{proof}

\section{Computation}\label{sec:computation}

As in the classical IQC setting, to guarantee stability, the frequency-domain inequality (FDI) \eqref{eq:expfdi} must be verified for every $\omega \in [0,2\pi)$. However, if the IQC in question exhibits a \emph{factorizaton}, then the discrete-time KYP Lemma can be applied to convert the infinite-dimensional FDI to a finite-dimensional LMI. We now review these results.

\begin{defn}\label{defn:factor}
We say $\Pi$ has a \emph{factorization} $(\Psi,M)$ if
\begin{equation*}
  \Pi(z) = \Psi(z)^*M \Psi(z)\:,
\end{equation*}
where $\Psi$ is a stable linear time-invariant system, $M$ is a constant Hermitian matrix, and $\Psi(z)^*$ denotes the conjugate transpose of $\Psi(z)$.
\end{defn}

\begin{rem}\label{rem:rhoIQC_time_domain}
If $\Pi(z)$ has a factorization $(\Psi,M)$ and $\Psi(\rho z)$ is stable, then~\eqref{rr} is equivalent to 
  \begin{equation*}
    \sum_{k=0}^\infty \rho^{-2k}z_k^\tp  M z_k \geq 0\:,
\quad\text{where }
    z\defeq \Psi \begin{pmatrix}y\\u\end{pmatrix} \:.
  \end{equation*}
This follows immediately from Parseval's theorem.
\end{rem}

The KYP lemma, stated below, is attributed to Kalman, Yakubovich, and Popov. A simple proof and further references can be found in~\cite{rantzer_KYP}.

\begin{lem}[Discrete-time KYP Lemma]\label{lem:kyp}
	Given matrices $A,B$ and a Hermitian matrix $M$, and assuming $A$ has no eigenvalues on the unit circle, the FDI
	\begin{equation*}
	\begin{bmatrix}
	(zI-A)^{-1}B\\I
	\end{bmatrix}^* M 
	\begin{bmatrix}
	(zI-A)^{-1}B\\I
	\end{bmatrix} \prec 0
	\end{equation*}
	holds for all $z\in\T$ if and only if there exists a solution $P=P^\tp$ and $\lambda \ge 0$ to the LMI
	\begin{equation*}
	\begin{bmatrix}
	A&B\\I&0
	\end{bmatrix}^\tp
	\begin{bmatrix}
	P&0\\0&-P
	\end{bmatrix}
	\begin{bmatrix}
	A&B\\I&0
	\end{bmatrix} + \lambda M \prec 0\:.
	\end{equation*}
\end{lem}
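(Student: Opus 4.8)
\emph{Proof plan.} The plan is to prove the two implications separately, treating LMI $\Rightarrow$ FDI as routine and FDI $\Rightarrow$ LMI as the substantive direction. Throughout, write
\[
N_P\defeq\bmat{A&B\\I&0}^\tp\bmat{P&0\\0&-P}\bmat{A&B\\I&0},
\]
and, for $z\in\T$ and a vector $u$ of compatible size, $w(z,u)\defeq\bmat{(zI-A)^{-1}B\\I}u$, which is well defined since $A$ has no eigenvalue on $\T$. For the first direction, given $P=P^\tp$ and $\lambda\ge0$ with $N_P+\lambda M\prec0$, I would fix $z\in\T$ and $u\ne0$ and put $\xi\defeq(zI-A)^{-1}Bu$, so that $v\defeq w(z,u)=\bmat{\xi\\u}$ satisfies $\bmat{A&B\\I&0}v=\bmat{A\xi+Bu\\\xi}=\bmat{z\xi\\\xi}$. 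Then $v^*N_Pv=(|z|^2-1)\,\xi^*P\xi=0$, so $\lambda\,v^*Mv=v^*(N_P+\lambda M)v<0$; this forces $\lambda>0$ and $v^*Mv<0$, and since $z$ and $u\ne0$ were arbitrary the FDI follows. (Incidentally, $\lambda=0$ would give $0<0$, so the scalar $\lambda$ is effectively positive.)

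For FDI $\Rightarrow$ LMI, it suffices to find $P=P^\tp$ with $N_P+M\prec0$, which is the LMI with $\lambda=1$. I would argue by convex duality. The computation above gives $\langle N_P,\,w(z,u)w(z,u)^*\rangle=w(z,u)^*N_Pw(z,u)=0$ for all $P$, $z\in\T$, $u$, so every rank-one matrix $w(z,u)w(z,u)^*$ lies in $\mathcal K\defeq\set{Q=Q^*\succeq0}{\langle Q,N_P\rangle=0\ \ \forall\,P=P^\tp}$. The key structural fact to establish is the reverse inclusion at the level of cones: \emph{$\mathcal K$ equals the convex cone generated by $\set{w(z,u)w(z,u)^*}{z\in\T,\ u}$}; this cone is closed because the spectral hypothesis makes $z\mapsto(zI-A)^{-1}B$ continuous on the compact set $\T$.

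Granting this, I would conclude by contradiction: if no $P=P^\tp$ satisfies $N_P+M\prec0$, then $-M\notin\set{N_P+S}{P=P^\tp,\ S\succ0}$, a nonempty open convex set, so a separating-hyperplane argument gives a nonzero Hermitian $Q$ with $\langle Q,N_P+S\rangle\ge\langle Q,-M\rangle$ for all $P=P^\tp$ and all $S\succeq0$. Letting $S$ range over the positive semidefinite cone forces $Q\succeq0$; since $P\mapsto\langle Q,N_P\rangle$ is then a linear functional bounded below on a subspace, it vanishes identically, so $Q\in\mathcal K$, and taking $P=S=0$ gives $\langle Q,M\rangle\ge0$. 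As $Q\ne0$, the structural fact yields $Q=\sum_j w(z_j,u_j)w(z_j,u_j)^*$ with finitely many $z_j\in\T$ and some $u_j\ne0$, whence
\[
0\le\langle Q,M\rangle=\sum_j w(z_j,u_j)^*\,M\,w(z_j,u_j)<0,
\]
because by the FDI each summand with $u_j\ne0$ is strictly negative and at least one is nonzero---a contradiction. Hence the LMI is feasible.

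I expect the structural fact above---that the extreme rays of $\mathcal K$ are exactly the ``frequency-point'' rank-one matrices $w(z,u)w(z,u)^*$---to be the main obstacle; it is the analytic core of the KYP lemma and does not follow from generic low-rank facts about semidefinite feasibility sets, but exploits the Lyapunov-type structure of $P\mapsto N_P$ together with $A$ having no eigenvalue on $\T$. A workable route is a rank-reduction induction: given $Q\in\mathcal K$ of rank $>1$, use the relations $\langle Q,N_P\rangle=0$ to split $Q$ into two nonzero elements of $\mathcal K$ of strictly smaller rank. An alternative is the classical analytic route, in which strict negativity of the Popov function on $\T$ is shown to yield a stabilizing solution $P$ of the associated discrete algebraic Riccati equation that satisfies the LMI; there the obstacle becomes solvability of that equation. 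For a fully self-contained argument I would defer to~\cite{rantzer_KYP}.
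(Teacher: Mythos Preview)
The paper does not prove this lemma at all; it simply states it and refers to~\cite{rantzer_KYP} for a proof. Your plan is essentially the argument given in that reference: the LMI $\Rightarrow$ FDI direction via the elementary identity $w(z,u)^*N_Pw(z,u)=0$ on $\T$, and the FDI $\Rightarrow$ LMI direction via a separating-hyperplane argument whose crux is the structural lemma you isolate---that every nonzero $Q\in\mathcal K$ is a finite conic combination of the rank-one matrices $w(z,u)w(z,u)^*$. You correctly flag this as the nontrivial step and point to the rank-reduction route that Rantzer uses; your deferral to~\cite{rantzer_KYP} at the end is thus exactly what the paper does from the outset. In short, your proposal is sound and aligned with the source the paper cites; there is nothing in the paper itself to compare against.

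One small remark: your aside that the cone generated by $\{w(z,u)w(z,u)^*\}$ is closed ``because $z\mapsto(zI-A)^{-1}B$ is continuous on the compact set $\T$'' is a shortcut---closedness of a conic hull needs a bit more than continuity of the generating map (e.g., Carath\'eodory plus the normalized generator set being compact and bounded away from~$0$). This does not affect the argument, since the rank-reduction proof of the structural lemma yields a finite decomposition directly and does not rely on closedness of the cone.
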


\begin{cor}
Suppose the realization of $G$ is given by $(A,B,C,D)$ and assume $\Pi$ has a factorization $(\Psi,M)$, where the realization of $\Psi$ is given by
\begin{equation*}
\Psi = \left[\begin{array}{c|cc}
   A_\Psi & B_{\Psi_1} & B_{\Psi_2} \\ \hlinet
  C_\Psi & D_{\Psi_1} & D_{\Psi_2}
  \end{array}\right].
\end{equation*}
Then \eqref{eq:expfdi} is equivalent to the existence of $P=P^\tp $ and $\lambda\ge 0$ such that
  \begin{equation}\label{eq:explmi}
      \begin{bmatrix}  
    \hat A^\tp P\hat A-\rho^2 P & \hat A^\tp P\hat B\\ \hat B^\tp P \hat A & \hat B^\tp  P \hat B
    \end{bmatrix}
     + 
    \lambda \begin{bmatrix}
    \hat C^\tp \\ \hat D^\tp 
    \end{bmatrix}
    M
    \begin{bmatrix}
    \hat C & \hat D
    \end{bmatrix}
     \prec 0
  \end{equation}
  where $(\hat A, \hat B, \hat C, \hat D)$ are given by
  \begin{equation*}
  \stsp{\hat A}{\hat B}{\hat C}{\hat D} \defeq
  \left[\begin{array}{cc|c}
    A & 0 & B \\
    B_{\Psi_1} C & A_\Psi & B_{\Psi_2}+B_{\Psi_1}D \\ \hlinet
    D_{\Psi_1}C & C_\Psi & D_{\Psi_2} + D_{\Psi_1}D
  \end{array}\right]
  \end{equation*}
  
\end{cor}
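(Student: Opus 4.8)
The plan is to reduce the frequency-domain inequality \eqref{eq:expfdi} to the KYP lemma by first identifying the transfer function whose graph appears in \eqref{eq:expfdi} after the factorization is inserted. Starting from condition iii) of Theorem~\ref{thm:main}, I would substitute $\Pi(\rho z) = \Psi(\rho z)^* M \Psi(\rho z)$, so that \eqref{eq:expfdi} becomes
\[
\left(\Psi(\rho z)\bmat{G(\rho z)\\I}\right)^* M \left(\Psi(\rho z)\bmat{G(\rho z)\\I}\right) \prec 0, \quad \forall z \in \T.
\]
The key observation is that the composite system $z \mapsto \Psi(z)\bmat{G(z)\\I}$ has a state-space realization obtained by cascading $\Psi$ with the system $\bmat{G\\I}$ (whose realization is $(A,B,\bmat{C\\0},\bmat{D\\I})$), feeding the output of the latter into the two input channels $B_{\Psi_1},B_{\Psi_2}$ of $\Psi$. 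Carrying out this cascade composition — stacking the states of $G$ and $\Psi$, routing $C x$ and $D$ into the $B_{\Psi_1}$ channel and the identity into the $B_{\Psi_2}$ channel — gives precisely the quadruple $(\hat A,\hat B,\hat C,\hat D)$ displayed in the statement. This is essentially a bookkeeping computation, and I would present it as such, perhaps verifying one or two block entries explicitly.

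Next I would handle the $\rho z$ argument. Evaluating the composite transfer function at $\rho z$ instead of $z$ amounts, at the state-space level, to replacing $\hat A$ by $\rho^{-1}\hat A$ and $\hat B$ by $\rho^{-1}\hat B$ (since $(\rho z I - \hat A)^{-1}\hat B = \rho^{-1}(zI - \rho^{-1}\hat A)^{-1}\hat B$, modulo a constant prefactor that does not affect the graph). Equivalently, and more cleanly, I would keep the realization $(\hat A,\hat B,\hat C,\hat D)$ and note that requiring the FDI at $\rho z$ for all $z\in\T$ is the same as requiring, for all $z\in\T$,
\[
\bmat{(zI-\rho^{-1}\hat A)^{-1}\rho^{-1}\hat B\\I}^* M \bmat{(zI-\rho^{-1}\hat A)^{-1}\rho^{-1}\hat B\\I} \prec 0.
\]
Applying Lemma~\ref{lem:kyp} with the pair $(\rho^{-1}\hat A,\rho^{-1}\hat B)$ in place of $(A,B)$ yields the LMI
\[
\bmat{\rho^{-1}\hat A & \rho^{-1}\hat B\\ I & 0}^\tp \bmat{P & 0\\ 0 & -P}\bmat{\rho^{-1}\hat A & \rho^{-1}\hat B\\ I & 0} + \lambda \bmat{\hat C^\tp\\\hat D^\tp} M \bmat{\hat C & \hat D} \prec 0.
\]
Expanding the first term block-by-block gives $\bmat{\rho^{-2}\hat A^\tp P\hat A - P & \rho^{-2}\hat A^\tp P\hat B\\ \rho^{-2}\hat B^\tp P\hat A & \rho^{-2}\hat B^\tp P\hat B}$; multiplying the whole inequality through by $\rho^2 > 0$ (which preserves the sign) and absorbing the $\rho^2$ into $\lambda$ produces exactly \eqref{eq:explmi}. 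I should also check the hypothesis of the KYP lemma, namely that $\rho^{-1}\hat A$ has no eigenvalues on $\T$: this is where the assumption $G(\rho z)\in\RHinf$ (analyticity on the closed unit disk, so $\hat A$ has spectral radius $<\rho$) together with $\Psi(\rho z)$ stable enters, guaranteeing $\rho^{-1}\hat A$ is in fact Schur.

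The main obstacle I anticipate is not any single hard step but rather getting the cascade realization $(\hat A,\hat B,\hat C,\hat D)$ exactly right — in particular tracking which of $\Psi$'s two input ports ($B_{\Psi_1}$ versus $B_{\Psi_2}$) receives the output of $G$ and which receives the pass-through of the $I$ block, and correctly propagating the feedthrough term $D$ through $\Psi$ (giving the $B_{\Psi_2}+B_{\Psi_1}D$ and $D_{\Psi_2}+D_{\Psi_1}D$ entries). The $\rho$-scaling is a minor additional wrinkle once one sees that it acts as a uniform $\rho^{-1}$ scaling on $(\hat A,\hat B)$; the rest is the standard KYP argument applied verbatim, with the $\rho^2 P$ in \eqref{eq:explmi} being the only visible trace of the modification.
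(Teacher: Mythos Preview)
Your proposal is correct and follows essentially the same route as the paper's proof: identify the cascade realization $(\hat A,\hat B,\hat C,\hat D)$ of $\Psi\bmat{G\\I}$, rewrite the $\rho z$ evaluation as a standard FDI for the pair $(\rho^{-1}\hat A,\rho^{-1}\hat B)$, apply the KYP lemma, and multiply through by $\rho^2$. One small notational slip: in the displayed FDI just before you invoke Lemma~\ref{lem:kyp}, the middle matrix should be $\bmat{\hat C^\tp\\\hat D^\tp} M \bmat{\hat C & \hat D}$ rather than bare $M$ (as you yourself use in the resulting LMI); otherwise the argument matches the paper's, and your explicit remark on the eigenvalue hypothesis for $\rho^{-1}\hat A$ is a welcome addition.
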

\begin{proof}
\if\MODE1
A similar result is proven in \cite{seiler}, which we repeat here for completeness.
\begin{equation*}
\begin{bmatrix}
      G(z)\\I
      \end{bmatrix}^*
      \Pi(z)
      \begin{bmatrix}
      G(z)\\I
  \end{bmatrix}
=
  \begin{bmatrix}\star \\ \star \end{bmatrix}^*
      \left(
      \begin{bmatrix}
      \hat C^\tp \\ \hat D^\tp 
      \end{bmatrix}
      M
      \begin{bmatrix}
      \hat C & \hat D
      \end{bmatrix}
      \right)
      \begin{bmatrix}
      (zI-\hat A)^{-1}\hat B\\I
      \end{bmatrix}
\end{equation*}
where $\star$ denotes the repeated part of the quadratic form.
Similarly, we have
\begin{align*}
  & \rho^{-2}\begin{bmatrix}
      G(\rho z)\\I
      \end{bmatrix}^*
      \Pi(\rho z)
      \begin{bmatrix}
      G(\rho z)\\I
  \end{bmatrix}\\
  &=
  \begin{bmatrix}\star \\ \star \end{bmatrix}^*
      \left(
      \begin{bmatrix}
      \hat C^\tp \\ \hat D^\tp 
      \end{bmatrix}
      \rho^{-2}M
      \begin{bmatrix}
      \hat C & \hat D
      \end{bmatrix}
      \right)
      \begin{bmatrix}
      (\rho zI-\hat A)^{-1}\hat B\\I
      \end{bmatrix}\\
  &= 
   \begin{bmatrix}\star \\ \star \end{bmatrix}^*
    \left(
    \begin{bmatrix}
    \hat C^\tp \\ \hat D^\tp 
    \end{bmatrix}
    \rho^{-2}M
    \begin{bmatrix}
    \hat C & \hat D
    \end{bmatrix}
    \right)
    \begin{bmatrix}
    (zI-\rho^{-1}\hat A)^{-1}\hat \rho^{-1}B\\I
    \end{bmatrix}
\end{align*}
If $\rho^{-1}\hat A$ has no eigenvalues on the unit circle, we may then invoke Lemma~\ref{lem:kyp} (applied to $\rho^{-1} \hat A$, $\rho^{-1} \hat B$, and the appropriate $M$ term) and multiply through by $\rho^2$ to show that \eqref{eq:expfdi} is equivalent to the existence of $P=P^\tp$ and $\lambda \ge 0$ such that \eqref{eq:explmi} holds, as required.
\else
Omitted. A similar result is proven in \cite{seiler}.
\fi
\end{proof}
With the advent of fast interior-point methods to solve LMIs, the feasibility of the LMI \eqref{eq:explmi} can be quickly ascertained for any fixed $\rho^2$. Since the size of the LMI is on the order of the size of the system $G$ and the IQC $\Pi$, most practical linear systems lead to relatively small LMIs.

Finding the best upper bound amounts to minimizing~$\rho^2$ subject to~\eqref{eq:explmi} being feasible. This type of problem occurs frequently in robust control and is known as a \emph{generalized eigenvalue optimization problem} (GEVP)~\cite{boyd_linear_1997}. The GEVP is not an LMI because~\eqref{eq:explmi} is not jointly linear in $\rho^2$ and $P$. One simple approach to solving the GEVP is to perform a bisection search on $\rho^2$, but there are more sophisticated methods available; see for example~\cite{boyd_elghaoui}.

\begin{rem}
These results may also be carried through in continuous time. In that case, an equation analogous to \eqref{eq:expfdi} must be satisfied for $G(s-\lambda)$ for all $\omega \in [0,\infty)$, and can be verified by finding $P=P^\tp$, $\lambda\ge 0$ such that
\begin{equation*}
  \begin{bmatrix}  
  \hat A^\tp P+ P \hat A-2\lambda P & P\hat B\\ \hat B^\tp P & 0
  \end{bmatrix} + 
  \lambda
  \begin{bmatrix}
  \hat C^\tp \\ \hat D^\tp 
  \end{bmatrix}
  M
  \begin{bmatrix}
  \hat C & \hat D
  \end{bmatrix}
   \prec 0
\end{equation*}
\end{rem}

\section{Examples}\label{sec:examples}
In this section, we show some classes of nonlinearities that can be described by $\rho$-IQCs and therefore used in Theorem~\ref{thm:main} to prove robust exponential stability of an interconnected system. In the case where $\rho=1$, these $\rho$-IQCs reduce to standard IQCs~\cite{megretski_system_1997}. This class of IQCs will be constructed for SISO systems, but they may be adapted for square MIMO systems where the nonlinearity is of the form $\diag(\{\Delta\})$ for a scalar $\Delta$, with little modification.

\subsection{Pointwise IQCs}
A nonlinearity $\Delta$ satisfies a pointwise IQC with a factorization $(\Psi, M)$ if $z_k^\tp Mz_k \geq 0$ for each $k$. In other words, the IQC holds pointwise in time. In this case, $\Delta$ also satisfies the associated $\rho$-IQC for all $\rho \leq 1$. 
Examples of pointwise IQCs include the $\gamma$ \emph{norm-bounded IQC}
\begin{equation*}
  \Pi = \begin{bmatrix}
  \gamma^2&0\\0&-1
  \end{bmatrix}\:,
\end{equation*}
and the \emph{$[\alpha,\beta]$ sector bounded IQC}, given by
\begin{equation*}
  \Pi = \begin{bmatrix}
  -2\alpha\beta&\alpha+\beta\\\alpha+\beta&-2
  \end{bmatrix}\:.
\end{equation*}
Note that the norm-bounded IQC is a special case of the sector IQC with the sector $[-\gamma,\gamma]$. These IQCs hold even if $\Delta$ is time-varying.

\subsection{Zames-Falb IQCs}

A nonlinearity $\Delta$ is \emph{slope-restricted on $[\alpha,\beta]$} where $0 \le \alpha \le \beta$ if the following relation holds for all $x$, $y$.
\begin{equation*}
\bl( \Delta(x)-\Delta(y)-\alpha(x-y) \br)^\tp
\bl( \Delta(x)-\Delta(y)-\beta(x-y) \br) \leq 0\:.
\end{equation*}
This relation states that the chord joining input-output pairs of $\Delta$ has a slope that is bounded between $\alpha$ and $\beta$.
This class of functions satisfies the so-called Zames-Falb family of IQCs \cite{heath_zames-falb_2005,zames_stability_1968}. We give the definition below.

\begin{prop}\label{prop:zf}
A nonlinearity $\Delta$ that is static and slope-restricted on $[\alpha,\beta]$ satisfies the Zames-Falb IQC
\begin{equation}\label{eq:zf}
  \Pi = \addtolength{\arraycolsep}{0mm}
  \bmat{-\alpha\beta(2\!-\!H \!-\! H^*) & \alpha(1\!-\!H) \!+\! \beta(1\!-\!H^*) \\
  \alpha(1\!-\!H^*) \!+\! \beta(1\!-\!H) &
  -(2\!-\!H\!-\!H^*)}
\end{equation}
where $H(z)$ is any proper transfer function with impulse response $h \defeq (h_0,h_1,\dots)$ that satisfies $||h||_1 \le 1$ and $h_k\geq 0$ for all $k$.
\end{prop}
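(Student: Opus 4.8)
The plan is to verify directly that $\Delta\in\IQC(\Pi(z))$, i.e.\ that the integral in~\eqref{iqceq} is nonnegative for every $y\in\ltwo$ with $u=\Delta(y)$, where $\Pi$ is given by~\eqref{eq:zf}. First I would apply Parseval's theorem (as in Remark~\ref{rem:rhoIQC_time_domain} with $\rho=1$) to rewrite that integral as a time-domain quadratic form, using that $2-H-H^*$ is real and nonnegative on $\T$ and that $1-H^*(z)=\overline{1-H(z)}$ there. Then I would carry out the standard \emph{loop transformation}: substitute $u=\alpha y+\psi(y)$ with $\psi(x)\defeq\Delta(x)-\alpha x$, noting that slope-restriction makes $\Delta$ Lipschitz and (implicitly) $\Delta(0)=0$, so $\psi(0)=0$ and $\psi(y)\in\ltwo$. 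After collecting terms the coefficients of $|\hat y|^2$ cancel identically, leaving
\[
\int_\T \bmat{\hat y\\ \hat u}^*\Pi(z)\bmat{\hat y\\ \hat u}\,dz \;=\; 2\big\langle \psi(y),\; \tilde\psi(y)-h\ast\tilde\psi(y)\big\rangle,
\]
where $\tilde\psi(x)\defeq\beta x-\Delta(x)$, $h\ast(\cdot)$ denotes causal convolution with the impulse response $h$, and $\langle\cdot,\cdot\rangle$ is the $\ltwo$ inner product.

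Next I would record the consequences of slope-restriction on $[\alpha,\beta]$ with $0\le\alpha\le\beta$: both $\psi$ and $\tilde\psi$ are nondecreasing, both vanish at $0$, they satisfy $\psi(x)+\tilde\psi(x)=(\beta-\alpha)x$ (so each is $(\beta-\alpha)$-Lipschitz), and $\psi(x)\tilde\psi(x)\ge0$ for all $x$ (set the second argument to $0$ in the slope-restriction inequality). Writing $p\defeq\psi(y)$, $q\defeq\tilde\psi(y)$, and $C_j\defeq\sum_k p_k q_{k-j}$, this already gives $C_0=\sum_k\psi(y_k)\tilde\psi(y_k)\ge0$.

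The heart of the proof is the inequality $\langle p,\,q-h\ast q\rangle\ge0$. I would split it as
\[
\langle p,\,q-h\ast q\rangle \;=\; \Bigl(1-\textstyle\sum_{j\ge0}h_j\Bigr)C_0 \;+\; \sum_{j\ge0} h_j\,(C_0-C_j),
\]
so that, since $h_j\ge0$ and $\sum_j h_j\le1$, it suffices to prove $C_0\ge C_j$ for each $j\ge1$. This is exactly the classical monotone Zames--Falb estimate. One route: $C_0-C_j=\sum_k\psi(y_k)\bigl(\tilde\psi(y_k)-\tilde\psi(y_{k-j})\bigr)\ge\sum_k\bigl(G(y_k)-G(y_{k-j})\bigr)=0$, where $G(x)\defeq(\beta-\alpha)\int_0^x\psi(t)\,dt-\frac12\psi(x)^2$; the pointwise bound $\psi(b)\bigl(\tilde\psi(b)-\tilde\psi(a)\bigr)\ge G(b)-G(a)$ reduces to the elementary estimate $(\beta-\alpha)\int_a^b\bigl(\psi(b)-\psi(t)\bigr)\,dt\ge\frac12\bigl(\psi(b)-\psi(a)\bigr)^2$, which follows from the $(\beta-\alpha)$-Lipschitz bound on $\psi$, and the telescoping sum vanishes since $y_k\to0$. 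Equivalently, one may note that after an arbitrarily small perturbation making $\tilde\psi$ strictly increasing, $\psi(y_k)=\phi\bigl(\tilde\psi(y_k)\bigr)$ for the nondecreasing $\phi\defeq\psi\circ\tilde\psi^{-1}$ with $\phi(0)=0$, so that $\langle p,\,q-h\ast q\rangle$ becomes the textbook Zames--Falb integral for the monotone nonlinearity $\phi$; the general case then follows by dominated convergence.

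I expect the two main obstacles to be: executing the loop-transformation algebra carefully enough to expose the cancellation that produces the $\langle\psi(y),\,\tilde\psi(y)-h\ast\tilde\psi(y)\rangle$ form, and handling rigorously the case where $\tilde\psi$ is not strictly increasing --- either by proving the pointwise inequality for $G$ without an injectivity assumption, or via the perturbation-and-limit argument above. The hypotheses $h_k\ge0$ and $\|h\|_1\le1$ are used precisely to make the convex split valid; causality of $H$ is not needed for this particular statement.
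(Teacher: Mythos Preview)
Your argument is correct. The paper itself does not prove this proposition --- its proof is the one-line citation ``See for example~\cite{heath_zames-falb_2005}'' --- so there is no paper argument to match. What you have written is the standard time-domain Zames--Falb proof, and it lines up closely with the proof the paper \emph{does} give for the $\rho$-generalization in Theorem~\ref{thm:rho_zf}: there too one reduces (via the factorization in Remark~\ref{remb}, which is exactly your loop transformation) to a telescoping inequality for each pure delay $H(z)=z^{-j}$ and then takes nonnegative combinations using $h_j\ge0$ and $\|h\|_1\le1$. The only stylistic difference is the potential used to effect the telescoping: the paper works with the convex antiderivative $g$ satisfying $\nabla g=\Delta$ and invokes the first-order convexity inequality $g(y)\ge g(x)+\Delta(x)^\tp(y-x)$, whereas you use $G(x)=(\beta-\alpha)\int_0^x\psi-\tfrac12\psi(x)^2$ together with the $(\beta-\alpha)$-Lipschitz bound on $\psi$; these encode the same information and lead to the same cancellation. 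One small quibble: your closing remark that ``causality of $H$ is not needed'' is slightly at odds with your own decomposition, which sums only over $j\ge0$; extending to anticausal taps requires the mirror inequality $C_0\ge C_{-j}$, which is true but needs its own (symmetric) argument.
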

\begin{proof}
See for example~\cite{heath_zames-falb_2005}.
\end{proof}

\begin{rem}\label{remb}
The Zames-Falb IQC~\eqref{eq:zf} admits the factorization
\[
\Psi = \bmat{ \beta(1 - H) & -(1-H) \\ -\alpha & 1 }
\quad\text{and}\quad
M = \bmat{0 & 1 \\ 1 & 0 }
\]
\end{rem}
In general, for a given fixed $\rho$, only a subset of the Zames-Falb IQCs will be $\rho$-IQCs. We now give a characterization of this subset.

\begin{thm}[Zames-Falb $\rho$-IQC]\label{thm:rho_zf}
	Suppose $\Delta$ is static and slope-restricted on $[\alpha,\beta]$. Then $\Delta\in\IQC(\Pi(z),\rho)$ where $\Pi$ is the Zames-Falb IQC~\eqref{eq:zf} and $H$ satisfies the additional constraint
	\begin{equation}\label{eq:rhozf_constraint}
	\sum_{k=0}^\infty \rho^{-2k} h_k \le 1
	\end{equation}
\end{thm}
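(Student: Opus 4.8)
The plan is to translate the $\rho$-IQC into its time-domain form and verify it directly, exploiting the monotonicity of the (loop-transformed) nonlinearity.

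First I would check that Remark~\ref{rem:rhoIQC_time_domain} is applicable with the Zames--Falb factorization $(\Psi,M)$ of Remark~\ref{remb}. Since $\rho\in(0,1)$ and $h_k\ge 0$, the constraint \eqref{eq:rhozf_constraint} gives $\sum_k\rho^{-k}h_k\le\sum_k\rho^{-2k}h_k\le 1<\infty$ (because $\rho^{-k}\le\rho^{-2k}$ for $k\ge0$); hence $H(\rho z)$, whose impulse response is $(\rho^{-k}h_k)$, is a stable transfer function, and therefore $\Psi(\rho z)$ is stable. By Remark~\ref{rem:rhoIQC_time_domain}, $\Delta\in\IQC(\Pi(z),\rho)$ is then equivalent to $\sum_{k\ge0}\rho^{-2k}z_k^\tp M z_k\ge 0$ for all $y\in\ltwo^\rho$ with $u=\Delta(y)$, where $z=\Psi(y,u)^{\tp}$. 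Writing $r\defeq\beta y-u$ and $s\defeq u-\alpha y$, the two channels of $z$ are $z_1=(1-H)r$ and $z_2=s$, and $M=\begin{bmatrix}0&1\\1&0\end{bmatrix}$, so the inequality to be proven is
\[
\sum_{k\ge0}\rho^{-2k}\bigl((1-H)r\bigr)_k\,s_k\ \ge\ 0 .
\]

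Next I would use the structure of a static slope-restricted map. Since $u_k=\Delta(y_k)$ with slope in $[\alpha,\beta]$, the signal $s_k=\Delta(y_k)-\alpha y_k$ is a monotone nondecreasing function of $y_k$ vanishing at $0$, and $r_k=(\beta-\alpha)y_k-s_k$ is likewise monotone nondecreasing in $y_k$ and vanishes at $0$; hence $r_k=g(s_k)$ for a monotone nondecreasing $g$ with $g(0)=0$ (after a routine regularization making $\Delta$ strictly slope-restricted so that $g$ is single-valued — the general case follows by a limiting argument, exactly as in the proof of Proposition~\ref{prop:zf}). Set $G(t)\defeq\int_0^t g$; then $G$ is convex, nonnegative, $G(0)=0$, and $g=G'$. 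Expanding the convolution and re-indexing, one rewrites the left-hand side above as
\[
\Bigl(1-\textstyle\sum_{j\ge0}\rho^{-2j}h_j\Bigr)\sum_{k\ge0}\rho^{-2k}r_k s_k\ +\ \sum_{j\ge0}\rho^{-2j}h_j\sum_{m\ge0}\rho^{-2m}r_m\bigl(s_m-s_{m+j}\bigr).
\]
The first term is nonnegative: $1-\sum_j\rho^{-2j}h_j\ge0$ by \eqref{eq:rhozf_constraint}, and $r_k s_k=g(s_k)s_k\ge0$. For the second term it suffices, since $h_j\ge0$, to show $\sum_{m\ge0}\rho^{-2m}r_m(s_m-s_{m+j})\ge0$ for each $j\ge0$. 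The subgradient inequality $G(s_{m+j})\ge G(s_m)+r_m(s_{m+j}-s_m)$ gives $r_m(s_m-s_{m+j})\ge G(s_m)-G(s_{m+j})$, and because $G\ge0$ and $\rho^{2j}\le1$,
\[
\sum_{m\ge0}\rho^{-2m}G(s_{m+j})=\rho^{2j}\sum_{n\ge j}\rho^{-2n}G(s_n)\ \le\ \sum_{n\ge0}\rho^{-2n}G(s_n),
\]
so $\sum_m\rho^{-2m}\bigl(G(s_m)-G(s_{m+j})\bigr)\ge0$. Combining, the whole sum is $\ge0$, which is the claimed $\rho$-IQC.

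The main obstacle — and the actual content of the theorem — is the final estimate: one must pair $(1-H)$ acting on $r$ against $s$ so that the convex potential $G$ of the monotone map $s\mapsto r$ can be brought in, and then observe that the weight $\rho^{-2k}$ is nondecreasing in $k$, so that the shifted energy $\sum_m\rho^{-2m}G(s_{m+j})$ is dominated by the unshifted one precisely because $\rho^{2j}\le1$. This is exactly where the strengthened hypothesis \eqref{eq:rhozf_constraint} (rather than merely $\|h\|_1\le1$) does double duty: it is what keeps $1-\sum_j\rho^{-2j}h_j$ nonnegative \emph{and} what makes $\Psi(\rho z)$ stable so that the time-domain reduction is legitimate. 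The remaining points — absolute convergence justifying the re-indexing (which follows from $|r_k|,|s_k|\le(\beta-\alpha)|y_k|$, $y\in\ltwo^\rho$, and $\sum_j\rho^{-2j}h_j\le1$) and the regularization making $g$ single-valued — are routine.
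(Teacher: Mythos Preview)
Your argument is correct and rests on the same core idea as the paper: pass to the time-domain weighted sum via the factorization $(\Psi,M)$, apply the convexity (subgradient) inequality coming from monotonicity of the loop-transformed nonlinearity, and observe that the $\rho^{-2k}$ weights make the shifted potential sums dominated by the unshifted ones.

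The organization differs in two minor respects. First, the paper proves the ``off-by-$j$'' case $H(z)=\rho^{2j}z^{-j}$ separately (where the sum telescopes exactly) and then writes a general admissible $H$ as a nonnegative combination of the sector IQC and off-by-$j$ IQCs, using linearity of $\Pi$ in $1-H$; you instead treat a general $H$ in one pass via the decomposition $(1-\sum_j\rho^{-2j}h_j)\sum_k\rho^{-2k}r_ks_k+\sum_j\rho^{-2j}h_j\sum_m\rho^{-2m}r_m(s_m-s_{m+j})$, which is a little more direct. Second, after the loop transformation the paper uses the potential for $s$ as a function of $r$ (i.e., $s=\nabla g(r)$), whereas you use the Fenchel-dual direction $r=G'(s)$; this is why you need the extra regularization step to make $s\mapsto r$ single-valued, which the paper avoids. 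Both choices lead to the same inequality and neither buys anything substantive over the other.
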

\begin{proof}
\if\MODE1
The proof involves rewriting the IQC as a discrete-time sum which can be split into parts that can separately be shown to be nonnegative. See the Appendix for the full proof.
\else
Omitted. See~\cite{boczar_arxiv} for a detailed proof.
\fi
\end{proof}

\subsection{Multiple IQCs}
Much like how multiple IQCs can give more precise $L_2$ gain bounds, multiple $\rho$-IQCs can give more precise convergence rates. We present numerical examples with both pointwise and dynamic $\rho$-IQCs.
Consider a stable discrete-time LTI system $G(z)$ in feedback with the sigmoidal nonlinearity $\Delta(x) = b\arctan(x)$. This interconnection is shown in Fig.~\ref{fig:lure_ex}.
\begin{figure}[th]
  \centering
  \includegraphics{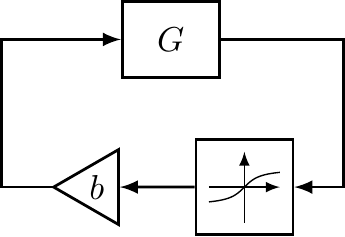}
  \caption{LTI system $G$ in feedback with the static nonlinearity \mbox{$\Delta(x)=b \arctan(x)$}.}\label{fig:lure_ex}
\end{figure}

Since this nonlinearity is static, in the $[0,b]$ sector, and $[0,b]$ slope-restricted, it satisfies the following $\rho$-IQCs
\if\bugfix0
\begin{align}
\Pi_\textup{n}(z) &\defeq \bmat{b^2&0\\0&-1}  && \text{(norm-bounded)} \label{aa}\\
\Pi_0(z) &\defeq \bmat{0&b\\b&-2} && \text{(sector bounded)} \label{bb}\\
\Pi_k(z) &\defeq
\bmat{0&\star \\ b(1\!-\!\rho^{2k} z^{-k}) & -2} && \text{(off-by-$k$ Z-F)} \label{cc}
\end{align}
where the $\star$ makes the matrix Hermitian and we can choose any $k=1,2,\dots$. 
\else
	\if\MODE1
	\begin{align}
	\Pi_\textup{n}(z) &\defeq \bmat{b^2&0\\0&-1}  && \text{(norm-bounded)} \label{aa}\\
	\Pi_0(z) &\defeq \bmat{0&b\\b&-2} && \text{(sector bounded)} \label{bb}\\
	\Pi_k(z) &\defeq
	\bmat{0&b(1 - \rho^{2k} \bar z^{-k}) \\ b(1 - \rho^{2k} z^{-k}) & -2 +\rho^{2k}(z^{-k}+\bar z^{-k})} && \text{(off-by-$k$ Zames-Falb)} \label{cc}
 	\end{align}
	\else
	\begin{align}
	\Pi_\textup{n}(z) &\defeq \bmat{b^2&0\\0&-1}  &&\hspace{-4cm}\text{(norm-bounded)} \label{aa}\\
	\Pi_0(z) &\defeq \bmat{0&b\\b&-2} && \hspace{-4cm}\text{(sector bounded)} \label{bb}\\
	\Pi_k(z) &\defeq
	\bmat{0 & b(1-\rho^{2k} \bar z^{-k}) \\
	b(1-\rho^{2k} z^{-k}) & -2 +\rho^{2k}(z^{-k}+\bar z^{-k})} \notag\\
	& && \hspace{-4cm}\text{(off-by-$k$ Zames-Falb)} \label{cc}
	\end{align}
	\fi
	where we can choose any $k=1,2,\dots$. 
\fi

\paragraph{A tight bound.} For our first example, we analyzed the following LTI system\footnote{This example was inspired by the continuous time example given in \cite{scherer}, which showed that adding more IQCs yields better $L_2$ gain bounds.}
\begin{equation}\label{G1}
G_1(z) \defeq -\frac{(z+1)(10z+9)}{(2z-1)(5z-1)(10z-1)}
\end{equation}
We solved the feasibility LMI~\eqref{eq:explmi} using MATLAB together with the CVX package~\cite{cvx2,cvx} to find the fastest guaranteed rate of convergence. We searched over positive linear combinations of subsets of the IQCs~\eqref{aa}--\eqref{cc}. Fig.~\ref{fig:rates} shows the rate bounds achieved as a function of which IQCs were used. For the particular choice $b=1$, Fig.~\ref{fig:states} shows sample state trajectories.

The true exponential rate can be found by linearizing the system about its equilibrium point. Namely, $\Delta(x) \approx bx$. Formally, this is an application of Lyapunov's indirect method~\cite[Thm.~4.13]{khalil}. The result is that the decay rate should correspond to the maximal pole magnitude of the closed-loop map $G(z)/(1-bG(z))$. We display the true exponential rate as the dashed black curve in Fig.~\ref{fig:rates} and Fig.~\ref{fig:states}.

For this example, the $\rho$-IQC approach yields a tight upper bound to the true exponential rate when we use a combination of the sector and off-by-1 IQCs.
\begin{figure}[th]
  \centering
  \includegraphics{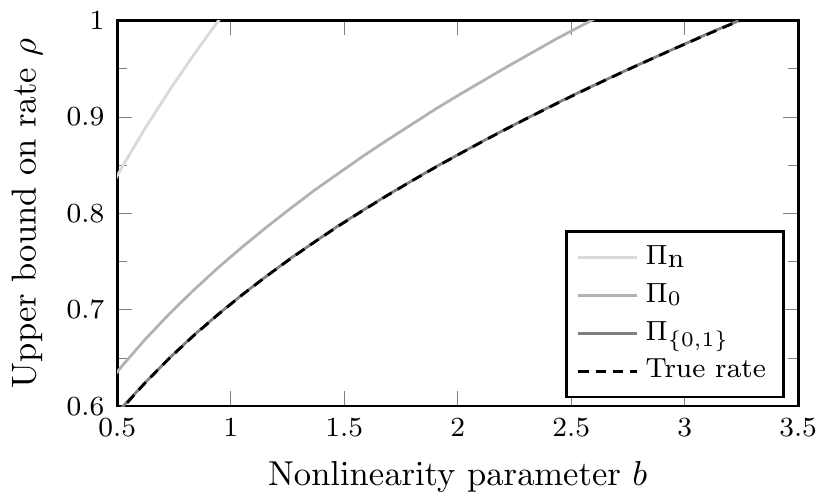}
  \caption{Upper bounds on the exponential convergence rate $\rho$ for the system $G_1(z)$ given in~\eqref{G1} in feedback as in Fig.~\ref{fig:lure_ex}. A tight bound is achieved using two $\rho$-IQCs.\label{fig:rates}}
\end{figure}

\begin{figure}[th]
  \centering
  \includegraphics{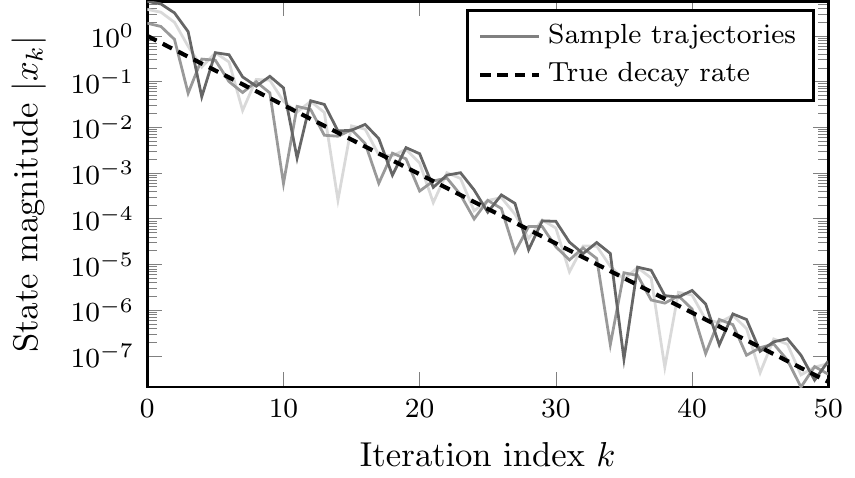}
  \caption{State decay over time of the system $G_1(z)$ in feedback as in Fig.~\ref{fig:lure_ex} with $b=1$ for various initial conditions $x_0\in[-15,15]$. The dashed black line is $\rho^k$, where $\rho=.7058$ is the true rate at $b=1$ in Fig.~\ref{fig:rates}.}
  \label{fig:states}
\end{figure}

\paragraph{A loose bound.} The $\rho$-IQC approach does not always achieve tight bounds as in the previous example. Consider the same problem as before but this time using
\begin{equation}\label{G2}
G_2(z) \defeq \frac{2z-1}{10(2z^2-z+1)}
\end{equation}
The rate bounds for various $\rho$-IQCs are shown in Fig.~\ref{fig:rates_ex2}. This time, we again observe that using more IQCs achieves better rate bounds, but the bound is not tight even after using six IQCs.

\begin{figure}[thpb]
	\centering
	\includegraphics{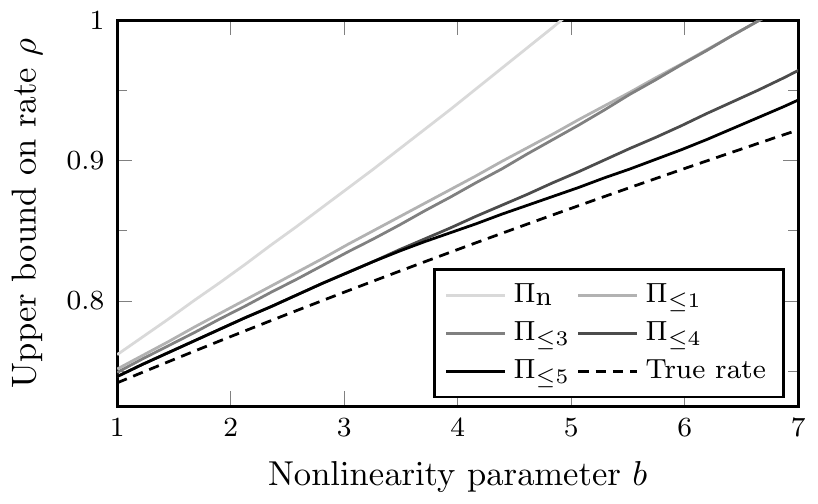}
	\caption{Upper bounds on the exponential convergence rate $\rho$ for the system $G_2(z)$ given in~\eqref{G2} in feedback as in Fig.~\ref{fig:lure_ex}. As we include more $\rho$-IQCs, we can certify tighter bounds.}
	\label{fig:rates_ex2}
\end{figure}

\section{Conclusion}\label{sec:conclusion}

We presented a modification of IQC theory that allows the certification of exponential rates. Although we only gave the $\rho$-IQC specialization for pointwise and Zames-Falb IQCs, the concept can in principle be extended to other IQCs such as, for example, uncertain time delays or slowly varying systems~\cite{megretski_system_1997}. As the dictionary of $\rho$-IQCs is further populated, the applicability of the technique outlined herein would be correspondingly expanded.

\section{Acknowledgments}\label{sec:acknowledgments}

The authors would like to thank Andrew Packard and \mbox{Murat Arcak} for very helpful discussions. L.~Lessard is supported by AFOSR award FA9550-12-1-0339.


\if\MODE1
\appendix

\section{Appendix}

\subsection{Proof of Proposition~\ref{prop:exp}}

Suppose the interconnection of Fig.~\ref{fig:lure2} is stable. Then there exists some $K>0$ such that for any choice of the signals $e$ and $f$ and for all $T$,
\begin{equation}\label{hh}
\sum_{k=0}^T  \bl(\|w_k\|^2 + \|v_k\|^2\br) \leq  K \sum_{k=0}^T  \bl(\|e_k\|^2 + \|f_k\|^2\br)
\end{equation}
The proof will follow by carefully choosing $e$ and $f$ to transform Fig.~\ref{fig:lure2} into Fig.~\ref{fig:lure1}.
To this end, note that $(A,B)$ is controllable by assumption. So there exists a finite sequence of inputs $u_0,\dots,u_{n-1}$ and corresponding outputs $y_0,\dots,y_{n-1}$ that drives the state of $G$ from $\xi_0=0$ to $\xi_n=x_0$. Therefore, if we set
\[
e_k = \begin{cases}\rho^{-k}u_k & 0\le k < n\\ 0 & k \ge n\end{cases}
,\;\;\;
f_k = \begin{cases}-\rho^{-k}y_k & 0\le k < n\\ 0 & k \ge n\end{cases}
\]
then we obtain $\xi_n = x_0$ in the interconnection of Fig.~\ref{fig:lure2}. Moreover,  $f_k=e_k=0$ for $k\ge n$, so we may cancel the $\rho$ blocks using the fact that $\rho_-\circ\rho_+$ is the identity operator. It follows that for  $k\ge n$, the two interconnections become identical and therefore $\xi_k = x_{k-n}$.

Substituting into~\eqref{hh}, we conclude that
\begin{equation}\label{hhh}
\sum_{k=0}^T  \bl(\|w_k\|^2 + \|v_k\|^2\br) \leq  K \sum_{k=0}^{n-1} \bl(\|e_k\|^2 + \|f_k\|^2\br)
\end{equation}
The right-hand side of~\eqref{hhh} is independent of $T$, but~\eqref{hhh} holds for all $T$ so we must have
\begin{equation*}
\lim_{k \to\infty} \|w_k\| = 0
\qquad\text{and}\qquad
\lim_{k \to\infty} \|v_k\| = 0
\end{equation*}
For $k\ge n$, we have $w_k = \rho^{-k}u_k$ and $v_k = \rho^{-k} y_k$. Therefore there exists some constant $c>0$ such that
\[
\|u_k\| \le c \rho^k
\qquad\text{and}\qquad
\|y_k\| \le c \rho^k
\]
Now $(A,C)$ is observable by assumption, so let $L$ be such that the eigenvalues of $A+LC$ are all zero. Rewrite the dynamics of $G$ as
\begin{equation}\label{gg}
x_{k+1} = \bar A x_k +\bar B h_k
\end{equation}
where $\bar A \defeq A+LC$, $\bar B \defeq  \bmat{LD+B & -L}$, and $h_k \defeq \bmat{u_k^\tp  & y_k^\tp }^\tp $.
Iterating~\eqref{gg}, we obtain
\begin{equation}\label{ggg}
x_k = \bar A^k x_0 + \sum_{i=0}^{k-1} \bar A^{k-1-i} \bar B h_i
\end{equation}
Since all eigenvalues of $\bar A$ are zero, $\bar A$ is nilpotent and so $\bar A^n = 0$. For $k\ge n$, \eqref{ggg} therefore becomes
\begin{equation*}\label{qq}
x_k = \sum_{i=0}^{n-1} \bar A^{n-1-i} \bar B h_{k-n+i}
\end{equation*}
We can now bound the state using the triangle inequality.
\begin{align*}
\|x_k\|
&\le \underbrace{\bl\| \bmat{ \bar A^{n-1}\bar B & \dots & \bar A \bar B & \bar B} \br\|}_\gamma \sum_{i=k-n}^{k-1} \|h_i\| \\
&\le \gamma\, c \left( \frac{ \rho^{-n}-1}{1-\rho} \right) \rho^k
\end{align*}
and this completes the proof.\qedhere

\subsection{Proof of Theorem~\ref{thm:rho_zf}}

We will prove this general result by first considering the simpler case where the slope restriction is on $[\alpha,\beta] = [0,\infty]$ and $H(z) = \rho^{2j} z^{-j}$. Note that this choice trivially satisfies~\eqref{eq:rhozf_constraint}. In this case, the~$\Pi$ from~\eqref{eq:zf} becomes
\begin{equation}\label{obj}
\Pi = \bmat{ 0 & 1-\rho^{2j}\bar z^{-j} \\ 1-\rho^{2j}z^{-j} & 0 }
\end{equation}
where $\bar z$ denotes the complex conjugate of $z$. We call~\eqref{obj} the ``off-by-$j$'' Zames-Falb IQC. We would like to show that $\Delta \in \IQC(\Pi(z),\rho)$. Appealing to Definition~\ref{def:piqc} and Remarks~\ref{rem:rhoIQC_time_domain} and~\ref{remb}, this amounts to proving that
\begin{equation}\label{kk2}
\sum_{k=0}^\infty \rho^{-2k} u_k^\tp ( y_k - \rho^{2j} y_{k-j} ) \ge 0
\end{equation}
We will prove~\eqref{kk2} by borrowing the approach from~\cite{lessard_analysis_2014}. If $\Delta$ is multidimensional, we require that $\Delta$ be the gradient of a potential function~\cite{heath_zames-falb_2005}. By the assumption that $\Delta$ is slope-restricted on $[0,\infty]$, we have
\[
(\Delta(x)-\Delta(y))^\tp(x-y) \ge 0 \quad\text{holds for all }x,y
\]
In other words, $\Delta$ is monotone. Now define the scalar function $g$ such that $\grad g = \Delta$. By Kachurovskii's theorem, $g$ is convex and we have
\[
g(y) \ge g(x) + \Delta(x)^\tp(y-x)
\quad\text{for all }x,y
\]
Moreover, setting $(x,y)\mapsto(y_k,0)$ or $(x,y)\mapsto(y_k,y_{k-j})$ leads to the two inequalities:
\begin{align}
u_k^\tp y_k &\ge g(y_k) \label{f1} \\
u_k^\tp (y_k - y_{k-j}) &\ge g(y_k) - g(y_{k-j}) \label{f2}
\end{align}
We will assume for simplicity that $g(x) \ge 0$ for all $x$.
Substituting~\eqref{f1} and \eqref{f2} into the left-hand side of~\eqref{kk2}, the partial sum from $0$ to $T$ is:
\begin{align*}
& \sum_{k=0}^T \rho^{-2k} u_k^\tp ( y_k - \rho^{2j} y_{k-j} ) \\
&=\sum_{k=0}^T \rho^{-2k}\bl( (1-\rho^{2j})u_k^\tp y_k + \rho^{2j}u_k^\tp(y_k-y_{k-j})\br) \\
&\ge \sum_{k=0}^T \rho^{-2k}\bl( (1-\rho^{2j})g(y_k) + \rho^{2j}(g(y_k)-g(y_{k-j})) \br) \\
&= \sum_{k=0}^T \rho^{-2k}\bl( g(y_k) - \rho^{2j}g(y_{k-j}) \br) \\
&= \sum_{k=T-j+1}^{T} \rho^{-2k} g(y_k) \ge 0
\end{align*}
Since each partial sum is nonnegative, the infinite sum (which must converge) is also nonnegative, and therefore we have proven~\eqref{kk2}.
For the case of general $[\alpha,\beta]$, observe that $\Pi$ from~\eqref{eq:zf} may be factored as
\[
\Pi = \bmat{ \beta & -1 \\ -\alpha & 1 }^\tp
\bmat{ 0 & 1-H(z)^* \\ 1-H(z) & 0 }  
\bmat{ \beta & -1 \\ -\alpha & 1 }
\]
So the above proof holds verbatim if we make the substitution $(y,u) \mapsto (\beta y - u,u-\alpha y)$.

Now we consider the case of a more general $H(z)$. Suppose $H(z) = \sum_{k=0}^\infty h_k z^{-k}$ where $h_k$ satisfies~\eqref{eq:rhozf_constraint}. Then,
\begin{align*}
1- H(z) &= 1-\sum_{k=0}^\infty h_k z^{-k}\\
&= \underbrace{1-\sum_{k=0}^\infty \rho^{-2k}h_k}_c + \sum_{k=0}^\infty\rho^{-2k}h_k\left(1-\rho^{2k}z^{-k}\right) \\
&=  c\,(1-H_s) + \sum_{k=0}^\infty\rho^{-2k}h_k\left(1-H_k(z)\right)\:,
\end{align*}
where $H_k(z)=\rho^{2k}z^{-k}$ and $H_s=0$. Note that $H_k(z)$ corresponds the off-by-$k$ Zames-Falb IQC, which we proved above is a $\rho$-IQC. Also, $H_s$ corresponds to the sector IQC, which is also a $\rho$-IQC. Now note that the general Zames-Falb IQC~\eqref{eq:zf} is linear in $I-H$ and $I-H^*$. Therefore, $\Pi(z)$ is a positive linear combination of $\rho$-IQCs and must therefore be a $\rho$-IQC itself. \qedhere
\else\fi
{
\if\MODE1\else\small\fi
\bibliographystyle{abbrv}
\bibliography{freq}
}

\end{document}